\documentclass[letterpaper, 10pt, conference]{ieeeconf}
\IEEEoverridecommandlockouts
\usepackage{amsmath,amsfonts,amssymb}
\usepackage{graphicx}
\usepackage{xcolor}
\usepackage{mathtools}
\usepackage{nicefrac}
\usepackage{tikz, pgfplots, pgfplotstable}
\usepackage{glossaries}
\usepackage{xspace}
\usepackage{algpseudocode}
\usepackage{algorithm}
\usepackage{booktabs}
\usepackage{cite}
\usetikzlibrary{positioning}
\usetikzlibrary{decorations.pathreplacing}
\usepgfplotslibrary{fillbetween,groupplots}
\usetikzlibrary{patterns,patterns.meta}

\let\labelindent\relax 
\usepackage{enumitem}

\usepackage[table]{xcolor}
\definecolor{lightgray}{gray}{0.9}
\let\oldtabular\tabular
\let\endoldtabular\endtabular
\renewenvironment{tabular}{\rowcolors{2}{white}{lightgray}\oldtabular}{\endoldtabular}
\usepackage{multirow}

\pgfplotsset{compat=1.18}

\newtheorem{assumption}{Assumption}
\newtheorem{definition}{Definition}
\newtheorem{theorem}{Theorem}
\newtheorem{lemma}{Lemma}
\newtheorem{remark}{Remark}

\newtheorem{example}{Example}
\newcommand{\mm}{\theta}

\newcommand{\nstates}{n}
\newcommand{\ninputs}{m}
\newcommand{\nstatest}{q}

\newcommand{\xt}{{x}}
\newcommand{\x}{\tilde{x}}
\newcommand{\auxvar}{z}
\newcommand{\SOL}{P}

\newcommand{\nconstu}{w}
\newcommand{\nconstx}{v}

\newcommand{\el}{\ell}

\newcommand{\Xf}{\mathcal{X}^\mathrm{f}}
\newcommand{\Xfi}{\mathcal{X}^\mathrm{f}_i}

\newcommand{\Vf}{J^\mathrm{f}}
\newcommand{\VN}{J_N}
\newcommand{\VNstar}{\VN^\star}
\newcommand{\VNhat}{\hat{J}_N}

\newcommand{\numNodes}{{n_\mathcal{V}}}
\newcommand{\numEdges}{{n_\mathcal{E}}}

\newcommand{\dec}{d}
\newcommand{\decset}{\mathcal{D}}

\newcommand{\setKll}{\Theta_\textsc{kll}}
\newcommand{\setCsn}{\Theta_\textsc{csn}}

\newcommand{\setN}{\mathcal{N}^{\theta_0,\dec}}

\newcommand{\stree}[1]{\mathcal{T}(#1)}

\newacronym{mjls}{MJLS}{Markov jump linear system}
\newacronym{ocp}{OCP}{optimal control problem}
\newacronym{mpc}{MPC}{model predictive control}
\newacronym{mss}{MSS}{mean-square stable}
\newacronym{lmi}{LMIs}{linear matrix inequalities}
\newacronym{let}{LET}{logical execution time}

\definecolor{firstcol}{HTML}{EE7733}
\definecolor{secondcol}{HTML}{0077BB}
\definecolor{thirdcol}{HTML}{99ddf6}
\definecolor{offwhite}{HTML}{FAEBD7}
\definecolor{greenlightcol}{HTML}{CCDDAA}

\newcommand{\colvec}[2][.8]{{
\setlength\arraycolsep{2pt}\scriptstyle\begin{bmatrix}#2\end{bmatrix}
}}

\pgfplotsset{ every non boxed x axis/.append style={x axis line style=-},
     every non boxed y axis/.append style={y axis line style=-}}

\newcommand\copyrighttext{%
  \footnotesize \textcopyright 2026 IEEE. Personal use of this material is permitted. Permission from IEEE must be obtained for all other uses, including reprinting/republishing this material for advertising or promotional purposes, collecting new collected works for resale or redistribution to servers or lists, or reuse of any copyrighted component of this work in other works.}
\newcommand\copyrightnotice{%
\begin{tikzpicture}[remember picture,overlay]
\node[anchor=south,yshift=5pt] at (current page.south) 
  {\fbox{\parbox{\dimexpr\textwidth-\fboxsep-\fboxrule\relax}{\copyrighttext}}};
\end{tikzpicture}%
}

\begin{document}

\title{Adaptive Handling of Deadline Misses via Stochastic MPC}

\author{
Melanie Gallant$^{1,2}$,
Paolo Pazzaglia$^{1}$,
Christoph Mark$^{1}$,
Kevin Schmidt$^{1}$,
Martina Maggio$^{2,3}$
\thanks{$^{1}\,$Robert Bosch GmbH, Germany, \texttt{\{name.surname\}@de.bosch.com}}
\thanks{$^{2}\,$Saarland University, Germany, \texttt{maggio@cs.uni-saarland.de}}
\thanks{$^{3}\,$Lund University, Sweden.
\newline \textbf{Acknowledgements:} This work was supported by the German Federal Ministry for Economic Affairs \& Climate Action under grant no.~13IPC021.}
}

\maketitle
\copyrightnotice
\begin{abstract}
Real-time control systems may experience sporadic computational overloads that cause controller executions to overrun their expected deadlines, affecting the timely update of control inputs. Such overruns are classically handled with static decisions, e.g., by either stopping and discarding the late execution at the deadline instant, or continuing it while skipping the next activations. This paper first demonstrates that such static policies can lead to suboptimal performance when the controlled system undergoes varying operating conditions, and thus proposes a control-driven hybrid overrun strategy that optimally adapts the  deadline miss handling at runtime. We model the closed loop as a Markov jump linear system, whose modes encode overrun histories, and we formulate a scenario-tree-based stochastic model predictive control problem that predicts and optimizes over possible sequences of deadline hits and misses. We evaluate our proposal on linear benchmarks including a double integrator and an aircraft pitch system.
\end{abstract}

\section{Introduction}

Commercial control systems must balance the conflicting needs of both adequate performance and price competitiveness.
These considerations impact the choice of the real-time-capable computational platform  used for implementing the controller.
As a result, it is not uncommon for the chosen platforms to handle only nominal operating conditions without problems:
This means that, in case of computational overloads, the periodic control signal calculation might not terminate always within the expected real-time deadline --an event denoted as an \emph{overrun}, or \emph{deadline miss} (with \emph{deadline hit} instead denoting a timely computation).
In industry, overruns are indeed fairly common~\cite{akesson2020empirical}, even in well-designed systems, due to complex hardware- and software-related dependencies. 
This is particularly relevant when dealing with computationally intensive control algorithms such as \gls{mpc}~\cite{rawlings2017model}, an increasingly popular solution to explicitly address control constraints such as actuator limitations and safety boundaries.

During overruns, the physical system might not receive timely control inputs, potentially affecting its performance and stability~\cite{pazzaglia2018beyond, vreman2021stability}.
Nonetheless, a properly designed system should be able to enforce a safe behavior~\cite{hobbs2022safety} and properly react to overruns. 
State-of-the-art overrun handling strategies generally require either ``\emph{killing}'' the computation that missed its deadline (and rolling back any partial computation) to start a new iteration, or letting the late execution ``\emph{continue}'' until completion avoiding the start of a new one, which is in turn either delayed or entirely skipped~\cite{CERVIN2005219, pazzaglia2019dmac}.

The {overrun strategy} has been generally treated as a system property~\cite{gallant2026comparing}, i.e., considered fixed, and usually selected by analyzing the closed-loop system properties to assess stability and performance guarantees.
In this paper, we argue that this approach has historically overlooked the influence of different operating conditions of the physical plant, by testing only specific step-response scenarios.
Indeed, our findings show that there are categories of dynamical systems where using one fixed overrun handling strategy in different operating conditions (such as different kinds of physical disturbances or different overload patterns) leads to suboptimal performance.
On the contrary, a more flexible approach that can switch between different strategies at runtime allows for better closed-loop performance.

\paragraph*{Contribution}
We propose a runtime-adaptive stochastic scenario-tree \gls{mpc} approach for real-time control applications which: i) decides at runtime which overrun strategies to apply in the next steps that might be affected by overruns; and ii)~computes the optimal control input considering the system evolution under different overrun strategies and potential deadline misses over a finite window.
To the best of our knowledge, this is the \textit{first proposal} of a predictive non-static overrun strategy for a real-time control system.

To this end, we model the controlled system as a \gls{mjls}, with a probabilistic characterization of deadline misses.
We make use of a {scenario-tree} \gls{mpc}, by casting the possible state evolutions over a finite horizon of the future deadline hits and misses under our hybrid strategies.
The \gls{mpc} iteratively computes the optimal control sequence and predicts the state evolution, which is used in an optimization algorithm to select the overrun strategies for the next steps that might be affected by deadline misses.
The design is then certified correct for the standard requirements of recursive feasibility and stability.
Finally, we test our method on multiple case studies, including an aircraft pitch system, showing improvements when using our hybrid overrun strategy against fixed ones.
We also propose pruning strategies to significantly reduce the computational load with minimal impact on control performance.

\paragraph*{Related Work}
A vast literature explores the probabilistic characterization of deadline misses in real-time systems~\cite{davis2019survey}.  Particularly relevant to this paper, Chen et al.~\cite{chen2024dawn} analyze soft real-time tasks where jobs can execute after their deadline until a dismissal point.
Deadline miss probabilities with Markov chain models are explored e.g., in~\cite{friebe2024efficiently} and in~\cite{frias2017probabilistic}. 
Weakly-hard models~\cite{bernat2001weaklyhard} are also commonly used to analytically bound the maximum number of deadline misses in a given time window. 
Bounding probabilities for multiple deadline misses in a row was explored in~\cite{chen2017probabilistic} using analytical bounds, and in~\cite{pazzaglia2019dmac} using scenario theory. 

As mentioned above, such systems require overrun handling strategies~\cite{CERVIN2005219, pazzaglia2019dmac}. Most works in literature employ a fixed strategy, with just few examples of mixed ones: 
A double-threshold approach is used in~\cite{fontanelli2013soft}, where missing the first deadline let the execution continue until a second threshold, which triggers its killing.
A mixed nominal-skip-abort strategy for networked control systems is presented in~\cite{soudbakhsh2016co}, but with delays shorter than one period.
A recent usage of mixed strategy was presented as an example in~\cite{gallant24dr} in a distributionally robust control design for \gls{mjls}s with mixed known, unknown and time varying probabilities of overruns.
In all such cases, no runtime adaptation is used:
our work is the first one to propose an online overrun strategy adaptation, with the additional benefit of being based on a predictive optimal control scheme.

Alternative runtime solutions to overloads focus on different perspectives. 
Choi et al.~\cite{choi2019job} proposed a dynamic job-class level priority assignments to enforce weakly-hard constraints.
Proactively skipping jobs to enforce weakly-hard constraints has also been explored, e.g., in~\cite{yoshimoto2011optimal, wang2021cross}. 
An example of dynamic scheduling strategy for overloaded systems is presented in~\cite{gao2022dynamic}.
Alternatively, also adapting at runtime the controller in case of overruns has been explored e.g., in~\cite{pazzaglia2021adaptive,vreman2022deadline}, but using simple feedback control designs. 

Recently, many works tackled the problem of safety guarantees for overloaded real-time control systems, such as \cite{huang2019formal, wang2021cross, hobbs2022safety, xu2023safety, gifford2024decntr}. All such works guarantee state evolution within a safety boundary in case of deadline misses.
In this paper, by making use of scenario-tree-based \gls{mpc}, safety in the presence of overruns can be intrinsically tackled, by directly encoding safety constraints in its formulation.

\section{Basic model and background}

We consider a discrete-time linear system
\begin{equation}\label{eq:phys}
    \x(k+1) = A\, \x(k) + B\, u(k), \quad \x(0) = \x_0,
\end{equation}
where $\x \in \mathbb{R}^{\nstates}$ is the state, $u \in \mathbb{R}^{\ninputs}$ is the input, and the matrices $A \in \mathbb{R}^{\nstates \times \nstates}$ and $B \in \mathbb{R}^{\nstates \times \ninputs}$ characterize the system evolution at discrete time steps $k\in\mathbb{N}$. 
The inputs and states are subject to polytopic constraints 
\begin{subequations}
\label{eq:constraints}
\begin{align}
   u(k) &\in \mathcal{U} = \left\{ u\in\mathbb{R}^\ninputs \;|\; H\,u \leq h \right\} \quad &\forall k \in \mathbb{N} \label{eq:constraints:input} \\
   \x(k) &\in \mathcal{X} = \left\{ \x\in\mathbb{R}^{\nstates} \;|\; G\,\x \leq g \right\} \quad &\forall k \in \mathbb{N} \label{eq:constraints:state}
\end{align}
\end{subequations}
with $H\in \mathbb{R}^{\nconstu\times\ninputs}$, $h \in \mathbb{R}^{\nconstu}$, $G\in\mathbb{R}^{\nconstx\times\nstates}$ and $g\in\mathbb{R}^{\nconstx}$.

\paragraph*{Real-time controller}

We employ \gls{mpc} to optimize in a predictive fashion the system input $u(k)$ that minimizes a quadratic cost term, and
regulates the state of system~\eqref{eq:phys} to its origin while satisfying the constraints \eqref{eq:constraints}. The controller operates as a real-time periodic task with a sampling period synchronized with that of system~\eqref{eq:phys}.
At the start of every controller period $k \in \mathbb{N}$, an instance of the controller task (\emph{job}) is triggered.
The $k$-th job executing during $[k, k+1)$ uses the state $\x(k)$ to compute a new control input $u(k+1)$ to be applied at the job's deadline.
Following the implicit deadline model, the $k$-th controller job has a deadline that coincides with the activation of the $(k+1)$-th controller job.

\paragraph*{Overruns and classic handling strategies}
If the controller task execution does not finish within its deadline, and thus experiences an \emph{overrun}, a timely valid value for $u(k+1)$ is not available.
Following standard assumptions, $u(k+1)$ is either set to zero (\textsc{zero} actuation strategy), or kept constant (\textsc{hold}) with the previously available value~\cite{schenato2009zero}.

In a real-time setting, overrunning jobs are treated as exceptions that need to be properly handled.
In literature, three overrun strategies are commonly discussed: \emph{kill, continue-and-skip-next}, and \emph{queue}~\cite{CERVIN2005219}.
Under kill, the job is terminated at the missed deadline, its partial computations discarded, and a new job is activated.
Under continue-and-skip-next, the overrunning job is continued until completion, while discarding the jobs that would have been activated in the next periods (until the late job completes).
Under queue, the overrunning job is continued until completion, while new jobs are added to a finite buffer (e.g., of size $1$) and start executing as soon as the late job finishes.
For analysis purposes, queue has been considered far less due to cascade delays and non-deterministic realizations, hence we focus our attention to kill (denoted hereafter as \textsc{kll}) and continue-and-skip-next (\textsc{csn}).

We consider a generic stochastic model for overruns, where the probabilities of deadline misses under \textsc{kll} and \textsc{csn} are available, e.g., by estimation based on past observations.
If only limited data is available, a distributionally robust approach as in \cite{gallant2026approximate} could be employed to explicitly take into account the estimation uncertainty.
We also assume that the number of consecutive periods without a completed job can be safely bounded by a positive integer $n_\mathrm{M}$.\footnote{The value $n_\mathrm{M}$ can be computed, e.g., by using a \emph{weakly-hard} analysis~\cite{bernat2001weaklyhard}. In general the value of $n_\mathrm{M}$ under \textsc{csn} is lower or equal the value under \textsc{kll} -- see for example the results presented in~\cite{maggioecrts,vreman2021stability}. We will here consider the largest value.}

\section{Adaptive overrun strategy}

This section describes our novel control and overrun handling approach that adaptively decides between \textsc{kll} and \textsc{csn} for potential overruns during the next few periods.

\subsection{System model with hybrid overrun strategy}

Our method iteratively enforces a hybrid sequence of overrun strategies at runtime, applied in case of overruns occurring before the next successfully completed control job.

\begin{definition}
\label{def:predictive_decision_sequence}
Let $n_\mathrm{M} \in \mathbb{N}$ denote the bound on consecutive overruns between time $k$ and the next job completion. A \emph{predictive decision sequence} of length $n_\mathrm{M}$ is defined as
\begin{align*}
    \dec(k) = \langle \dec_{k}, \dec_{k+1}, \dots, \dec_{k+n_\mathrm{M}-1} \rangle \in \{\textsc{kll}, \textsc{csn}\}^{n_\mathrm{M}},
\end{align*}
where each $\dec_{k+i}$ for $i \in \{0, \ldots, {n_\mathrm{M}-1} \}$ represents a decision regarding an overrun strategy, chosen from either $\textsc{kll}$ or $\textsc{csn}$. 
The set of all possible predictive decision sequences is denoted by $\decset$, with cardinality $|\decset| = 2^{n_\mathrm{M}}$.
\end{definition}

The combination of the plant~\eqref{eq:phys}, its controller, and the effect of deadline misses under our hybrid overrun strategy, can be modeled as a discrete-time \gls{mjls}
\begin{equation}
\label{eq:mjls}
\begin{aligned}
    \xt(k+1) &= A_{\mm(k)}\, \xt(k) + B_{\mm(k)}\, u(k), \\
    \xt(0) &= \xt_0, \; \mm(0)=\mm_0,
\end{aligned}
\end{equation}
where $\xt(k) \in\mathbb{R}^{\nstatest}$ is an extended state and  $u(k) \in \mathbb{R}^\ninputs$ the control signal applied during the $k$-th period. 
The system switches between a finite set $\Theta$ of pairs of properly sized matrices $\{ A_i, B_i \}_{i \in \Theta}$, which model the behavior of~\eqref{eq:phys} in our hybrid overrun strategy setup. The evolution of the switching signal $\mm(k)\in\Theta$ is governed by a discrete-time Markov chain defined on the probability space $(\Theta,2^\Theta,\mathbb{P})$.

To fully characterize the switching behavior of the system, we introduce the following Markov modes corresponding to \emph{(i)}~a time step that ended with a job completion ($\textsc{h}$); and sequences of consecutive overruns having at the current time step either \emph{(ii)}~a missed deadline with decision of \textsc{csn} ($\textsc{c}$) or \emph{(iii)}~a missed deadline with decision of \textsc{kll} ($\textsc{k}$).
We can then formally define the set $\Theta$ as
\begin{equation}\label{eq:markovset}
\Theta \coloneqq \{\textsc{h}\} \cup \bigcup_{k=1}^{n_\mathrm{M}} \{\textsc{k},\textsc{c}\}^k,
\end{equation}
where $|\Theta|=2^{n_\mathrm{M}+1}-1$.
For compactness, we introduce the notation $\star$\textsc{k} and $\star$\textsc{c} to represent all Markov modes in $\Theta$ ending in \textsc{k} and \textsc{c}, respectively (e.g., for $n_\mathrm{M}=2$, $\star\textsc{k}:= \{\textsc{k}, \textsc{kk},\textsc{ck}\}$).
Each decision immediately constrains the reachable modes, i.e., if $d_k=\textsc{kll}$ then, at the next time step, $\theta(k+1) \in \{\textsc{h},\star\textsc{k}\}$.
We define the reduced sets of modes $\Theta_d$ as $\setKll=\{\textsc{h},\star\textsc{k}\}$ and $\setCsn=\{\textsc{h},\star\textsc{c}\}$.
The matrix $\Pi(d_k) = \left[\pi_{ij}(d_k)\right]_{i,j\in{\Theta}}$ encodes the transition probabilities $\pi_{ij}(d_k)=\mathbb{P}\left[\mm(k+1)=j\,|\,\mm(k)=i, \, d_k\right]$ between the Markov modes, affected by the decision with 
\begin{align} \label{eq:transition_prob_d}
\pi_{ij}(d_k) = \begin{cases}
    p_{ij}, & j\in \Theta_{d_k} \\
    0, & j\notin \Theta_{d_k},
\end{cases}
\end{align}
and for the not yet decided steps with
\begin{align}
    \pi_{ij}(\varnothing) = \bar{p}_{ij}.
    \label{eq:transition_prob_no_dec}
\end{align}
The transition probability to the $\textsc{h}$ mode only depends on the predecessor mode and is independent of the decision, i.e., $\pi_{i,\textsc{h}}(d_k) = \pi_{i,\textsc{h}}(\varnothing)=\pi_{i,\textsc{h}}$, for all $i\in\Theta$.
The decision remains open for future predicted steps after a hit, when a new decision sequence will be selected.
In that case, the distribution of transition probabilities to $\star\textsc{k}$ and $\star\textsc{c}$ is weighted by constant real-valued weighting factor $a\in(0,1)$, e.g., $\bar{p}_{\textsc{h},\textsc{k}} = a(1-p_{\textsc{h},\textsc{h}})$ and $\bar{p}_{\textsc{h},\textsc{c}} = (1-a)(1-p_{\textsc{h},\textsc{h}})$.

Figure~\ref{fig:graph} shows an example of the Markov chain for our hybrid strategy model, with $n_\mathrm{M}=2$ and thus $7$ modes.
Overrun modes (i.e., not $\textsc{h}$ modes) encode their consecutive overrun history (prefix) as well as the current overrun.
They are only reachable from the overrun mode defined by their prefix, or from the $\textsc{h}$ mode, if there is no prefix.
Given a decision sequence $d$ and after pruning the graph of modes not reachable from \textsc{h}, the graph is stochastic, i.e., the sum of the weights of all outgoing edges from a mode is one.
E.g., for $d=\langle \textsc{kll}, \textsc{csn}\rangle$, only the modes $\Theta = \{ \textsc{h}, \textsc{k}, \textsc{kc}\}$ would remain, and
$\pi_{\textsc{h},\textsc{k}}(d_k) = 1-\pi_{\textsc{h},\textsc{h}}$ and $\pi_{\textsc{k},\textsc{kc}}(d_{k+1}) = 1-\pi_{\textsc{k},\textsc{h}}$.

\begin{figure}[t]
    \centering
    \includegraphics{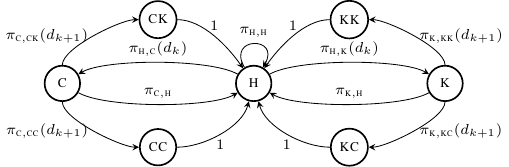}
    \vspace{-5mm}
    \caption{Graph of switching signal with $n_\mathrm{M}=2$.}
    \label{fig:graph}
\end{figure}
\begin{figure}[t]
    \centering 
    \includegraphics{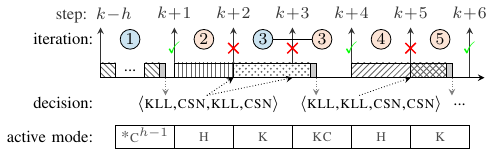}
    \caption{Execution example with $n_{\mathrm{M}}=4$. Job numbers are orange if executed under \textsc{kll}, and blue under \textsc{csn}. A job executing over multiple periods can be subject to different strategies (e.g., job \raisebox{.5pt}{\textcircled{\raisebox{-.8pt} {3}}} is first under \textsc{csn} and then under \textsc{kll}).}
    \label{fig:task}
\end{figure}

Finally, we formally introduce the matrices of the dynamics associated to the Markov modes defined in~\eqref{eq:markovset}.
For this purpose, we define the extended state as $\xt = [\x^\top \,\, \auxvar_1^\top \,\, \auxvar_2^\top]^\top$, where the auxiliary variable $\auxvar_1(k) \in \mathbb{R}^\ninputs$ stores the most recent value of $u$ from time step $k-1$ and $\auxvar_2(k) \in \mathbb{R}^\ninputs$ contains the output of the control job that was active during the previous period.\footnote{This compact formulation, inspired from~\cite{gallant2025softMPC}, allows to use the same matrix pair $(A_{\textsc{H}},B_{\textsc{H}})$ for both the case of a time step where a job completes within its first deadline, and for the case of a late job completing (often called in literature \emph{recovery hit}). For this reason, the Markov mode \textsc{H} encodes here both a timely and late job completion.}
To model the impact of different overruns onto the dynamics~\eqref{eq:mjls}, we define for each Markov mode an individual pair of system matrices $(A_i, B_i)$:
\begin{alignat*}{6}
    A_{\textsc{h}} = \colvec{
        A & 0 & B \\
        0 & 0 & I \\
        0 & 0 & 0 \\
    },\phantom{^\top}\; &
    A_{\star\textsc{k}} = \colvec{
        A & B & 0 \\
        0 & \Delta & 0 \\
        0 & 0 & 0
    },\phantom{^\top}\; &
    A_{\star\textsc{c}} = \colvec{
        A & B & 0 \\
        0 & \Delta & 0 \\
        0 & 0 & I \\
    },\phantom{^\top} \\
    B_{\textsc{h}} = \;\colvec{
        0 & 0 & I
    }^\top,\; &
    B_{\star\textsc{k}} = \;\colvec{
        0 & 0 & I
    }^\top,\; &
    B_{\star\textsc{c}} = \;\colvec{
        0 & 0 & 0
    }^\top,
\end{alignat*}
where $\Delta\in\mathbb{R}^{m\times m}$ is either the identity matrix ($I$ for \textsc{hold}) or a matrix of zeros ($0$, for  \textsc{zero}).

\begin{example}
    Figure~\ref{fig:task} shows an example of execution of the control task under our approach. The first shown iteration started at step $(k-h)$ and completes during step $k$. Then the hybrid overrun strategy decision process (depicted in gray) is executed and outputs the predictive decision sequence $\dec(k+1)=\langle\textsc{kll},\textsc{csn},\textsc{kll},\textsc{csn}\rangle$, to be used from step $k+1$ for up to $n_{\mathrm{M}}=4$ consecutive overruns. During the $(k+1)$-th step the job does not complete and it is therefore killed, following $\dec_0 = \textsc{kll}$. A new job is then activated at step $(k+2)$, again not completing in time. Since $\dec_1 =\textsc{csn}$, that job is allowed to continue and no new activation takes place at step $(k+3)$. The active job completes before $(k+4)$, and a new decision is made:  $d(k+4)=\langle\textsc{kll},\textsc{kll},\textsc{csn},\textsc{csn}\rangle$. Then, the job started at $(k+4)$ does not complete and is killed, while the following one completes its execution, and a new decision is provided. 
    The last line in Figure \ref{fig:task} represents the active Markov mode, identifying what happened in the relevant previous periods (either a successful completion, or overruns dealt with different strategies).
\end{example}

\subsection{Scenario-tree representation}

\begin{figure}[t]
    \centering
    \includegraphics{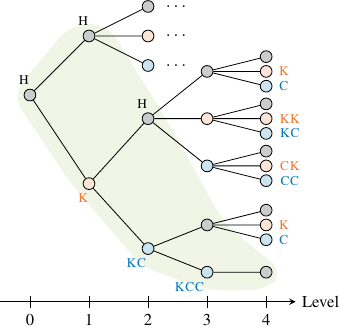}
    \caption{Scenario tree with $N=4$, $n_\mathrm{M}=3$, $\theta_0=\textsc{h}$ and $\dec=\langle\textsc{kll},$ $\textsc{csn},\textsc{csn}\rangle$. Nodes of mode $\star\textsc{k}$/$\star\textsc{c}$ are colored orange/blue. The green area highlights the nodes where the chosen $\dec$ holds.}
    \vspace{-2mm}
    \label{fig:scenario_trees}
\end{figure}

Similarly to \cite{gallant2025softMPC}, we cast the predicted behavior of \eqref{eq:mjls} into \emph{scenario trees}. 
\begin{definition}
A scenario tree $\mathcal{T}$ is an acyclic directed weighted graph $(\mathcal{V}, \mathcal{E})$ with nodes (vertices) $\mathcal{V} = \{v_0,\dots,v_{\numNodes-1}\}$ and edges $\mathcal{E} = \{e_0,\dots,e_{\numEdges-1}\}$. 
Each node is a $4-$tuple $v_\ell = (\xt_\ell,u_\ell,\theta_\ell,p_\ell) \in \mathcal{V}$ that contains the corresponding state, input, Markov mode, and absolute probability of occurrence.
We define $\mathcal{N} = \{0,\dots,\numNodes-1\}$ as the set of the tree's node indices and $\mathcal{N}_l$ the index set of nodes at a certain \textit{level} $l$ of the tree.
An edge between node $v_\ell$ to node $v_\varrho$ is weighted with the transition probability between the associated Markov modes.
The successor node $v_{\varrho}$ of node $v_\ell$ with mode $i$, given the edge exists, is identified with the function $\varrho=\mathtt{succ}(\ell,i)$.
A node's probability is computed by composing the probabilities of the edges along its unique path from the root node, starting from $p_0=1$.
\end{definition}

Given a prediction horizon $N\geq n_\mathrm{M} + 1$, we construct separate tree structures $\stree{\theta_0,\dec}=\langle \mathcal{V}^{\theta_0,\dec}, \mathcal{E}^{\theta_0,\dec}\rangle$ with node indices $\setN$, for all root nodes $v_0$ having modes $\theta_0\in\{\textsc{h},\star\textsc{k}\}$ (no new control job is activated when in modes $\star\textsc{c}$) and for all decision sequences $\dec\in\decset$.
Figure \ref{fig:scenario_trees} shows an example of such a tree.
The trees differ due to the pruning (highlighted area in the Figure) of the Markov chain produced by applying \eqref{eq:transition_prob_d} for a specific decision sequence $d$. 
When a node with mode $\textsc{h}$ is reached, a new (future) decision $\dec^+$ is expected, thus after such a node the tree spans all possible decisions (region outside the highlighted area), with edge weights determined by \eqref{eq:transition_prob_no_dec}.

\begin{definition}
A set of trees comprising all trees with root node of mode $\theta$ is called a forest $\mathcal{F}(\theta) = \{ \stree{\theta,\dec} \mid \dec\in\decset \}$ and the forest of all constructed tree structures is defined as $\mathcal{F} =  \bigcup_{\theta \in \{\textsc{h},\star\textsc{k}\}} \mathcal{F}(\theta) $.
\end{definition}
Overall, the number of generated scenario trees is $|\mathcal{F}| = 2^{n_\mathrm{M}} + n_\mathrm{M}\cdot 2^{n_\mathrm{M}-1}$ (obtained by adding the trees $\forall\dec\in\decset$ with $\theta_0=\textsc{h}$ and $\theta_0\in\{\star \textsc{k}\}$), for each of which a separate \gls{ocp} is compiled offline. The complexity of such offline setup is $\mathcal{O}(2^{n_\mathrm{M}})$.
Further, we use a mapping function to portray the time-shift of scenario trees given a specific sequence of modes in between started control jobs:

\begin{definition}
\label{def:node_index_mapping}
Let $h \in \mathbb{N}$ be a positive integer representing a time shift. We define the function $\Gamma: \mathcal{N}^{\theta(k),\dec} \to \mathcal{N}^{\theta(k-h),\dec}$ as a mapping of node indices from a tree at time $k$ to a tree at time $k-h$. Specifically, $\Gamma$ maps node indices from the tree $\stree{\theta(k),\dec}$ to the corresponding node indices in the tree $\stree{\theta(k-h),\dec}$, where $\stree{\theta(k-h),\dec}$ refers to the tree used in the last started control job.
\end{definition}

\begin{remark}
\label{rem:g_dependency}
The function $\Gamma$ inherently depends on the specific path traversed within the decision tree over the time interval $[k-h, k]$. This path is uniquely determined by the sequence of past decisions $d$ and the historical mode sequence $[\theta(k-h), \dots, \theta(k)]$. 
This is crucial for establishing the correct correspondence between node indices in the current tree $\stree{\theta(k),\dec}$ and the nodes of the previous tree $\stree{\theta(k-h),\dec}$, from which the last control job was initiated.
\end{remark}

\subsection{Hybrid overrun scenario-tree MPC}
\label{sec:hybridscenariotreempc}
The control task uses the scenario tree $\stree{\theta(k),\dec(k)}$ introduced above to propagate the system trajectories from the current state $\xt(k)$ over all possible branches, and to compute the optimal control input for the expected mode sequence with horizon $N$.
The control objective of the \gls{ocp} activated at time $k$ is the minimization of the cost function 
\begin{multline}
    \VN(\xt_{\cdot|k},u_{\cdot|k},\mathcal{T}({\theta(k),\dec(k)}))  =\\  \sum_{\el \in \mathcal{N}^{\theta(k),\dec(k)}_{0:N-1}}  p_\el l(\xt_{\el|k},u_{\el|k}) + \sum_{\el\in\mathcal{N}^{\theta(k),\dec(k)}_N} p_\el \Vf(\xt_{\el|k},\mm_\el), \label{eq:cost_function}
\end{multline}
over the predicted state and input values $\xt_{\ell|k}, u_{\ell|k}$, where $l(\xt,u)=\xt^\top Q \xt + u^\top R u$ defines the stage cost with $Q, R \succ 0$ and $\Vf(\xt,\mm) = \xt^\top P_\mm \xt$ the terminal cost function with positive definite matrices $\SOL_\mm \in \mathbb{R}^{\nstatest\times\nstatest}$ for all $\mm \in \Theta$. 
The following standard assumption provides the terminal ingredients used to prove recursive feasibility and stability of the closed-loop system.
An approach to compute suitable terminal ingredients can be found, e.g., in \cite{patrinos2014mpc}.

\begin{assumption}
    There exists a mode-dependent terminal set $\Xf = \{\Xfi\}_{i\in\Theta} \subseteq \mathcal{X}$, terminal cost function $\Vf(\xt,i) = \xt^\top P_i \xt$ with positive definite matrices $\SOL_i \in \mathbb{R}^{\nstatest\times\nstatest}$, and static feedback controllers $K_i \in \mathbb{R}^{\nstatest\times\ninputs}$, such that $K_i\xt\in\mathcal{U}$, $(A_i+B_i K_i)\,\xt \in \Xf_j$ for all 
   $j\in\mathcal{S}_i = \{j \in \Theta \mid \bar{p}_{ij}>0\}$, and it holds true $\forall\xt\in\Xfi$:
    \begin{align*}
        l(\xt,K_i \xt) + \displaystyle{\sum_{j\in\Theta}} \, \bar{p}_{ij} \, \Vf \big((A_i+B_iK_i) \, \xt,j \big)
        \leq \Vf(\xt,i).
    \end{align*}
\label{assumption:terminal_set}
\end{assumption}
In the following, $\Phi_i = A_i+B_i K_i$ is used for convenience.
We define the \gls{ocp} for the tree $\stree{\theta(k),\dec(k)}$
as
\begin{subequations}\label{eq:ocp}
    \begin{alignat}{3}
     \VN^\star(\xt(k), \theta(k), d(k) &) \label{eq:ocp:cost}\\
     = \min_{z,u}\;\;
      & \VN(\xt_{\cdot|k},u_{\cdot|k},\mathcal{T}(\theta(k),\dec(k))) \nonumber
      &
      & \\
      \text{s.t.}\;\;\,
      & \xt_{0|k} = \xt(k), \label{eq:ocp_initial}\\ 
      & \xt_{\mathtt{succ}(\ell,i)|k} = A_{\theta_\el} \,\xt_{\el|k} + B_{\theta_\el} \,u_{\el|k}, \label{eq:ocp_evo}\\ 
      & H\, u_{\el|k} \leq h, \label{eq:ocp_constu}
      \\ 
      & G\, \xt_{\el|k} \leq g, \label{eq:ocp_constz}\\ 
      & \xt_{r|k} \in \mathcal{X}^\mathrm{f}_{\theta_r}, \label{eq:ocp_terminal}
    \end{alignat}
\end{subequations}
for all $i \in \mathcal{S}_{\theta_\el}$, $\el \in \mathcal{N}^{\theta(k),\dec(k)}_{0:N-1}$ and $r \in \mathcal{N}^{\theta(k),\dec(k)}_N$.

Here, \eqref{eq:ocp:cost} specifies the optimal value function, \eqref{eq:ocp_initial} encodes the initial conditions, \eqref{eq:ocp_evo} the system evolution. \eqref{eq:ocp_constu} deals with the actuator constraints and \eqref{eq:ocp_constz} with the state constraints. Finally \eqref{eq:ocp_terminal} encodes the terminal set constraint.

\subsection{Optimal hybrid overrun decision}
\label{sec:algo_hybrid_strategy}

\begin{algorithm}[t]
\small
\caption{Optimal hybrid overrun strategy search}
\label{alg:hybrid_strategy}
\begin{algorithmic}[1]
\Function{GET\_MINIMUM\_COST\_PATH}{$\xt^\star$, $u^\star$} \label{alg1:function_get_min_cost_path}
    \State shift $\xt^\star$, $u^\star$ from last solved OCP to new full tree \label{alg1:shift_x_u}
    \State min\_cost, min\_path = DFS(0) \Comment{\emph{call DFS for root node}} \label{alg1:dfs_0}
    \State \textbf{return} decision\_seq(min\_path) \Comment{\emph{output new decision $\dec^+$}}\label{alg1:return_min_path}
\EndFunction
\Function{DFS}{node $\el$} \label{alg1:function_dfs}
    \If{($\xt_\el$ is empty) or ($u_\el$ is empty)} \label{alg1:if_x_empty}
        \State $\xt_\el,u_\el \leftarrow$ propagate\_dynamics($\theta_\el$) \Comment{\emph{run $\eqref{eq:propagate}$}}
        \label{alg1:dynamics}
    \EndIf \label{alg1:end_if_x_empty}
    \State cost = cost\_func($\xt_\el,u_\el$) \Comment{\emph{evaluate cost of node $\el$}}\label{alg1:init_cost}
    \State path = [$\theta_\el$] \Comment{\emph{initialize list of path modes}}\label{alg1:init_path}
    \State cost += cost\_subtree($\mathtt{succ}(\el,\textsc{h})$) \Comment{\emph{cost of subtree after \textsc{h}}}\label{alg1:if_succ_h}
    \State child\_kc = [$\mathtt{succ}(\el,i)$ for $i$ in $\{\star\textsc{K},\star\textsc{C}\}$] \Comment{\emph{overrun nodes}} \label{alg1:get_child_ovr}
    \If{child\_kc is not empty} \label{alg1:if_child_ovr_not_empty}
        \For{j, child in enumerate(child\_kc)} 
            \State costs[j], paths[j] = DFS(child) \Comment{\emph{call DFS for child}}\label{alg1:dfs_child_ovr}
        \EndFor 
        \State cost += min(costs) \Comment{\emph{update with best child path cost}}\label{alg1:cost_plus_min_costs}
        \State path $\leftarrow$ path.extend(paths[argmin(costs)]) \Comment{\emph{app. path}}\label{alg1:path_plus_min_path}
    \EndIf \label{alg1:endif_child_ovr}
    \State \textbf{return} cost, path \label{alg1:return_node_cost_path}
\EndFunction
\end{algorithmic}
\end{algorithm}
After a control job successfully completes solving \eqref{eq:ocp} over $\stree{\theta,\dec}$, a new decision sequence $\dec^+$ of our hybrid overrun strategy must be computed. 
To streamline the presentation, we first introduce Algorithm \ref{alg:hybrid_strategy} for efficiently computing the optimal decision strategy, while the formal guarantees are provided in Section \ref{sec:guarantees} with Lemma \ref{lem:min_cost_decision}.
Algorithm \ref{alg:hybrid_strategy} uses as input the optimization variables $\xt^\star$, $u^\star$ from the solved \gls{ocp}, and outputs a new sequence $\dec^+$ that will be valid until the next completed job, when the strategy is updated again.

It is useful here to define the combined tree $\hat{\mathcal{T}}$ that is formed by merging all pruned trees with root mode $\theta_0 = \textsc{h}$, i.e. $\hat{\mathcal{T}} = \bigcup_{d\in\decset} \stree{d,\textsc{h}}$. The combined tree spans over all possible hybrid decision sequences, while all nodes with the same mode path from the root node are merged into one.

Let $k$ be the step immediately after the execution of the Algorithm.
First, the combined scenario tree $\hat{\mathcal{T}}$ starting at $\textsc{h}$ is assigned the candidate values $\hat{\xt}$, $\hat{u}$ as the properly shifted optimized variables $\xt^\star$, $u^\star$ from the completed \gls{ocp}, using the mapping function $g$ with the path $[\theta_0,\dots,\theta(k-1),\textsc{h}]$ followed in the previous tree (line \ref{alg1:shift_x_u}).
Subsequently, the routine for the depth first search (DFS) of the optimal hybrid strategy is entered, starting at the root node with index $0$ (line \ref{alg1:dfs_0}).
The function DFS($\el$) first assigns to node $\el$ a propagated state and input values if not available from $\xt^\star$, $u^\star$ (lines \ref{alg1:if_x_empty}-\ref{alg1:end_if_x_empty}). Such computation is performed as follows:
\begin{align}
\label{eq:propagate}
    \hat{u}_{\ell|k} = K_{\theta_{\ell}}^{\vphantom{\star}} \hat{\xt}_{\ell|k}, \qquad
    \hat{\xt}_{\ell|k} =  \Phi_{\theta_{\mathtt{pred}(\ell)}}\, \hat{\xt}_{\mathtt{pred}(\ell)|k},
\end{align}
where the function $\mathtt{pred}(\el)$ outputs the predecessor node of $\el$.
Line \ref{alg1:init_cost} ensures that the cost variable is initialized with the node's cost value, i.e., either the stage cost $l(\xt_\el,u_\el)$ or the terminal cost $\Vf(\xt_\el,\theta_\el)$, multiplied by its probability $p_\el$.
The path variable is initialized with the mode $\theta_\el$ (line \ref{alg1:init_path}).
Since the optimal path search algorithm is only considering overrun modes,
subtrees starting with $\textsc{h}$ are not traversed in the optimal path search (aside from the initial node). 
Instead, their costs are added to the predecessor node's cost (line \ref{alg1:if_succ_h}).
Then, the function iterates by calling DFS for all successor nodes with modes in $\{\star \textsc{k},\star \textsc{c}\}$ (if any, line \ref{alg1:dfs_child_ovr}). The cost of the optimal path is added to the local cost variable, and the path with minimum cost is appended to the local path variable (lines \ref{alg1:cost_plus_min_costs}-\ref{alg1:path_plus_min_path}), finally mapped to the corresponding decision sequence $\dec^+$ once all subroutines end.

\subsection{Feasibility and mean-square stability}
\label{sec:guarantees}
In the following section we present our theoretical results on recursive feasibility (Lemma \ref{lem:recursive_feasible}), optimality of the decision sequence (Lemma \ref{lem:min_cost_decision}) and mean-square stability of the closed-loop system (Theorem \ref{theorem:stability}).

\begin{lemma}
\label{lem:recursive_feasible}
    Let Assumption~\ref{assumption:terminal_set} hold and consider the MJLS~\eqref{eq:mjls} under control law $u(k) = u^\star_{0|k}$ resulting from \eqref{eq:ocp}. If the optimization problem \eqref{eq:ocp} is feasible at time $k=0$ and the hybrid strategy $d$ is not changed until termination of a control job, then the optimization problem~\eqref{eq:ocp} is recursively feasible for all $k>0$.
\end{lemma}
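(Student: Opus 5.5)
The argument follows the usual shifted-candidate induction for scenario-tree MPC. The main difference is that the OCP \eqref{eq:ocp} is only solved at time steps in which a new control job is activated, namely when $\theta(k)\in\{\textsc{h},\star\textsc{k}\}$. Let $k$ be such a time with \eqref{eq:ocp} feasible over $\stree{\theta(k),\dec(k)}$, with optimizer $(\xt^\star_{\cdot|k},u^\star_{\cdot|k})$. Let $k+h$ be the next activation time, so that $\theta(k+1),\dots,\theta(k+h-1)\in\star\textsc{c}$ and $\theta(k+h)\in\{\textsc{h},\star\textsc{k}\}$. In the intermediate $\star\textsc{c}$ steps no OCP is solved, so there is nothing to check there. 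The applied input is irrelevant in those steps because $B_{\star\textsc{c}}=0$.

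The first step is to show that the realized trajectory stays inside the tree. The realized mode sequence $\theta(k),\dots,\theta(k+h)$ has positive probability under the decision $\dec(k)$, which by hypothesis is not changed until a job terminates. It is therefore a path of $\stree{\theta(k),\dec(k)}$, and the pruning \eqref{eq:transition_prob_d} is consistent with it. Since the system is deterministic given the mode sequence, the realized state $\xt(k+h)$ equals the predicted $\xt^\star_{\varrho|k}$, where $\varrho$ is the node reached along this path. Because $h\le n_\mathrm{M}+1\le N$, this node lies at level at most $N$.

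The second step constructs a candidate for the new tree $\stree{\theta(k+h),\dec(k+h)}$. If $\theta(k+h)=\textsc{h}$, then $\dec(k+h)=\dec^+$ is a new sequence. If $\theta(k+h)\in\star\textsc{k}$, the decision sequence $\dec(k)$ is kept and the tree is the one rooted at that $\star\textsc{k}$ mode. Using the map $\Gamma$ of Definition~\ref{def:node_index_mapping} and Remark~\ref{rem:g_dependency}, I would assign to every node $\ell$ of the new tree whose image $\Gamma(\ell)$ lies at level $\le N$ the old optimal values $(\xt^\star_{\Gamma(\ell)|k},u^\star_{\Gamma(\ell)|k})$. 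The subtree below a node with the same mode history is the same set of mode sequences: after an $\textsc{h}$ node both trees span all decisions with probabilities \eqref{eq:transition_prob_no_dec}, and before it the pruning agrees with $\dec(k)$. It follows that the dynamics \eqref{eq:ocp_evo} and the constraints \eqref{eq:ocp_constu}--\eqref{eq:ocp_constz} hold at these nodes. The remaining nodes lie at new levels $N-h+1,\dots,N$ and descend from old leaves $r$ with $\xt^\star_{r|k}\in\mathcal{X}^\mathrm{f}_{\theta_r}$. For these I would propagate with the terminal controllers, setting $u=K_{\theta_\ell}\xt$ and $\xt_{\mathtt{succ}(\ell,i)}=\Phi_{\theta_\ell}\xt_\ell$ as in \eqref{eq:propagate}. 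By Assumption~\ref{assumption:terminal_set}, $K_i\xt\in\mathcal{U}$, and $\Phi_i\xt\in\mathcal{X}^\mathrm{f}_j\subseteq\mathcal{X}$ for every $j\in\mathcal{S}_i$. Induction over levels then gives input, state and terminal constraints on all extended nodes, in particular \eqref{eq:ocp_terminal} at level $N$.

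The main obstacle is the compatibility of successor sets between the old leaves' continuations and the new tree. Assumption~\ref{assumption:terminal_set} guarantees invariance only for $j\in\mathcal{S}_i$, the successors under the undecided probabilities $\bar p_{ij}$. The new tree may contain pruned successor sets, which are subsets of $\mathcal{S}_i$ because $p_{ij}>0$ should imply $\bar p_{ij}>0$ under the weighting $a\in(0,1)$. I would verify this inclusion explicitly from \eqref{eq:transition_prob_d}--\eqref{eq:transition_prob_no_dec}, which gives invariance along any subset of successors. I would also check that $\Gamma$ is well defined, meaning every node of the new tree at level $\le N-h$ has a unique preimage. Feasibility at $k+h$ then follows, and induction over activation times, starting from feasibility at $k=0$, yields the claim for all $k>0$.
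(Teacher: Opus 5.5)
Your proposal is correct and uses the same shifted-candidate argument as the paper: copy the old optimizer through $\Gamma$, fill the last levels with the terminal controllers $K_i$, and invoke the invariance in Assumption~\ref{assumption:terminal_set}. You carry it out explicitly over the $h$-step gap between activation times, which the paper's one-step write-up leaves implicit in its definition of $\Gamma$. Two small precisions: when $\theta(k+h)=\textsc{h}$ the new tree $\stree{\textsc{h},\dec^+}$ is pruned by $\dec^+$ at its root, so it is only contained in (not equal to) the unpruned old subtree below $\varrho$, which is all your construction needs; and new nodes whose image is an old leaf must receive the input $K_{\theta_\ell}\xt^\star_{\Gamma(\ell)|k}$, since the OCP has no $u^\star$ at level $N$.
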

\begin{proof}
    The proof uses similar argument as in \cite{patrinos2014mpc} and \cite{gallant2025softMPC}, adapted for the hybrid strategy decision case.
    
    Assume that at time $k$ the optimal state and input trajectories $\xt^\star_{\cdot|k},u^\star_{\cdot|k}$ resulting from MPC optimization problem \eqref{eq:ocp} exist. 
    At time $k+1$, we construct a feasible candidate input
    \begin{align*}
     \hat{u}_{\ell|k+1} &= u^\star_{\Gamma(\ell)|k} \quad && \forall \ell \in \mathcal{N}^{\theta(k+1),d}_{0:N-2}, \\
     \hat{u}_{\ell|k+1} &= K_{\theta_{\ell}}^{\vphantom{\star}}\xt^\star_{\Gamma(\ell)|k}\quad && \forall \ell \in \mathcal{N}^{\theta(k+1),d}_{N-1},
    \end{align*}
    where the mapping function $\Gamma$ depends on the previous decision $d$ and the mode sequence $[\theta(k), \theta(k+1)]$. 
        
    Due to feasibility at time $k$, the shifted input sequence satisfies the constraints \eqref{eq:ocp_constu} for all stages $\ell \in \mathcal{N}^{\theta(k+1),d}_{0:N-2}$. From Assumption \ref{assumption:terminal_set} the terminal input fulfills $\hat{u}_{\ell|k+1} \in \mathcal{U}$, verifying \eqref{eq:ocp_constu} at stage $\ell \in \mathcal{N}^{\theta(k+1),d}_{N-1}$.
    
    Applying $\hat{u}_{\ell|k+1}$ to the dynamics \eqref{eq:ocp_evo} results in the shifted state sequence
    \begin{alignat*}{3}
     \hat{\xt}_{\ell|k+1} &= \xt^\star_{\Gamma(\ell)|k} \quad \quad && \forall \ell \in \mathcal{N}^{\theta(k+1),d}_{0:N-1}, \\ 
    \hat{\xt}_{\ell|k+1} &=  \Phi_{\theta_{\mathtt{pred}(\ell)}}^{\vphantom{\star}} \, \hat{\xt}_{\mathtt{pred}(\ell)|k+1}  \qquad && \forall \ell \in \mathcal{N}^{\theta(k+1),d}_{N}.
    \end{alignat*}

    In view of feasibility at time $k$, the state constraints \eqref{eq:ocp_constz} are verified for all stages $\ell \in \mathcal{N}^{\theta(k+1),d}_{0:N-1}$.

    From invariance of the terminal set (Assumption \ref{assumption:terminal_set}), it follows that $\hat{\xt}_{\ell|k+1} \in \mathcal{X}^\mathrm{f}_{\theta_\ell}$, which verifies \eqref{eq:ocp_constz} and \eqref{eq:ocp_terminal} at stage $\ell \in \mathcal{N}^{\theta(k+1),d}_{N}$. This concludes the proof.
\end{proof}

\begin{lemma}\label{lem:min_cost_decision}
    Assume Lemma \ref{lem:recursive_feasible} holds. Let $\VNhat(\hat{\xt},\hat{u},\stree{\theta,d})$ denote \eqref{eq:cost_function} evaluated with candidate sequences $\hat{\xt},\hat{u}$, i.e., the suboptimal value function of the MPC optimization problem \eqref{eq:ocp}.
    Then there exists a predictive decision sequence $d^{+}$ and corresponding tree $\stree{\textsc{h},d^+}$ that satisfies
    \vspace{-0.4em}
    \begin{align*}
        \VNhat(\hat{\xt},\hat{u}, \stree{\textsc{h}, \dec^+}) \leq \VNhat(\hat{\xt},\hat{u}, \stree{\textsc{h}, \dec}) \quad \forall \dec \in \decset.
    \end{align*}
\end{lemma}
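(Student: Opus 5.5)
The existence of a minimizer follows from finiteness, and the real content is that Algorithm~\ref{alg:hybrid_strategy} returns it. Since $\decset$ is finite with $|\decset|=2^{n_\mathrm{M}}$, and the map $\dec\mapsto\VNhat(\hat{\xt},\hat{u},\stree{\textsc{h},\dec})$ takes a real value for every $\dec$, a minimizer $\dec^+\in\arg\min_{\dec\in\decset}\VNhat(\hat{\xt},\hat{u},\stree{\textsc{h},\dec})$ always exists and satisfies the claimed inequality. For the value to be well defined for every $\dec$, the candidate sequences $\hat{\xt},\hat{u}$ must be assigned on every node of every tree $\stree{\textsc{h},\dec}$. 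I take this from the construction preceding the lemma. Nodes covered by the shifted optimal solution receive $\xt^\star,u^\star$ through the mapping $\Gamma$, exactly as in the proof of Lemma~\ref{lem:recursive_feasible}. All remaining nodes are filled by the propagation \eqref{eq:propagate} with the terminal controllers $K_i$. The assumption that Lemma~\ref{lem:recursive_feasible} holds ensures that the shifted optimum exists and that these candidates are well defined.

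Next, I would show that the minimizer is the output of Algorithm~\ref{alg:hybrid_strategy}. I would work on the combined tree $\hat{\mathcal{T}}$. The first step is to decompose the cost \eqref{eq:cost_function} of each $\stree{\textsc{h},\dec}$ into two groups of terms:
\begin{itemize}
\item the nodes on the \emph{overrun spine} selected by $\dec$, i.e.\ the chain of $\star\textsc{k}/\star\textsc{c}$ nodes consistent with $\dec$;
\item the subtrees rooted at the $\textsc{h}$-successors of the root and of each spine node.
\end{itemize}
Those subtrees lie outside the pruned region and use the undecided probabilities \eqref{eq:transition_prob_no_dec}. Their candidate values and node probabilities depend only on the path from the root to their parent. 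This holds because $\pi_{i,\textsc{h}}$ does not depend on the decision, so identical mode paths in different trees carry the same probability and the same propagated state. Consequently, after merging identical mode paths, the cost becomes a sum over one root-to-leaf overrun path in $\hat{\mathcal{T}}$ of additive node weights $c_\el$. Each weight is the probability-weighted stage or terminal cost of node $\el$ plus the full cost of its $\textsc{h}$-subtree.

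With this additive structure, minimizing over $\dec$ amounts to selecting a minimum-weight path from the root through the binary branching of overrun nodes (a $\star\textsc{k}$ or $\star\textsc{c}$ child at each step) down to depth $n_\mathrm{M}$ or the horizon. This choice is in bijection with $\decset$, with the caveat that decisions beyond the horizon do not affect the cost; such ties I would resolve arbitrarily. I would then argue by induction on depth, from the leaves upward, that $\mathrm{DFS}(\el)$ returns $c_\el$ plus the minimum path weight of the subtree below $\el$, together with an argmin path. Lines~\ref{alg1:init_cost}--\ref{alg1:path_plus_min_path} implement exactly this recursion, so the returned path maps to a minimizing $\dec^+$.

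The main obstacle is the decomposition step: showing that the candidate values and node probabilities on shared prefixes coincide across the different pruned trees. Concretely, this requires that the weights $p_\el$ of spine nodes computed under \eqref{eq:transition_prob_d} are the same in every tree containing that node. Here I would use that $\pi_{i,j}(\dec_k)=p_{ij}$ for $j$ in the allowed set. The intended interpretation is that the complementary probability mass $1-\pi_{i,\textsc{h}}$ flows entirely to the single allowed overrun child, as the example $\pi_{\textsc{h},\textsc{k}}=1-\pi_{\textsc{h},\textsc{h}}$ shows. This mass therefore depends only on the prefix and not on later decisions, and I would state this explicitly.
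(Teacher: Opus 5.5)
Your first paragraph is the paper's proof. $\mathcal{D}$ is finite, and every tree $\stree{\textsc{h},\dec}$ is populated with the well-defined candidates that Lemma~\ref{lem:recursive_feasible} guarantees, so the minimum over $\mathcal{F}(\textsc{h})$ is attained at some $\dec^+$, which is all the statement asserts. The rest of your proposal shows, via the additive decomposition over $\hat{\mathcal{T}}$ and induction on the DFS, that Algorithm~\ref{alg:hybrid_strategy} actually returns this argmin; that goes beyond the paper, which takes the link ``by construction'' without argument, so it is a useful supplement (relevant to Theorem~\ref{theorem:stability}) rather than something this lemma needs.
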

\smallskip
\begin{proof}
    By Lemma \ref{lem:recursive_feasible} there exist feasible candidate sequences $\hat{\xt}, \hat{u}$ at all time steps $k > 0$.
    Each tree $\mathcal{T}_{\dec} \in \mathcal{F}(\textsc{h})$ is populated with its feasible candidate sequences $\hat{\xt}$, $\hat{u}$ resulting in the tree-dependent cost $\VNhat(\hat{\xt},\hat{u}, \mathcal{T}_\dec)$.
    By construction
    \vspace{-0.4em}
    \begin{align*}
        \VNhat(\hat{\xt}, \hat{u}, \mathcal{T}(\textsc{h}, \dec^+)) &= \min_{\mathcal{T}_d \in \mathcal{F}(\textsc{h})} \VNhat(\hat{\xt},\hat{u}, \mathcal{T}_{\dec}) \\
        &\leq \VNhat(\hat{\xt}, \hat{u}, \mathcal{T}(\textsc{h}, \dec)) \quad \forall \dec \in \decset,
    \end{align*}
    where $d^+ \in \mathcal{D}$ is the optimal predictive decision sequence associated with the minimizer $\mathcal{T}_{d^+}$.
\end{proof}

The proposed Algorithm \ref{alg:hybrid_strategy} is an efficient way to find the minimal cost decision sequence. Instead of evaluating all $2^{n_\mathrm{M}}+n_\mathrm{M}\cdot 2^{n_\mathrm{M}-1}$ pruned trees separately, only one combined tree is searched with an algorithm complexity of $\mathcal{O}(\numNodes)$ and $\numNodes = \mathcal{O}(3^N)$.
The following theorem, our main result, establishes mean-square stability of the closed-loop system under the hybrid strategy from Algorithm \ref{alg:hybrid_strategy}.
\begin{theorem}\label{theorem:stability}
    Let Assumption \ref{assumption:terminal_set} hold and consider the MJLS \eqref{eq:mjls} under control law $u(k) = u^\star_{0|k}$ resulting from \eqref{eq:ocp}. 
    If the \gls{ocp} \eqref{eq:ocp} is feasible for $\xt(0)$ and the hybrid strategy decision is computed by Algorithm \ref{alg:hybrid_strategy}, then the origin of the closed-loop system is mean-square stable.
\end{theorem}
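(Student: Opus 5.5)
We use the optimal value function $\VNstar$ as a stochastic Lyapunov function, in the style of \cite{patrinos2014mpc} and \cite{gallant2025softMPC}. Stability is proved over the random instants at which a control job completes: between two such instants the decision sequence $\dec$ is frozen, and at each completion Algorithm~\ref{alg:hybrid_strategy} picks a new $\dec^+$. The target is a conditional expected decrease
\begin{equation*}
\mathbb{E}\big[\VNstar(\xt(k+1),\theta(k+1),\dec(k+1)) \,\big|\, \xt(k),\theta(k),\dec(k)\big] - \VNstar(\xt(k),\theta(k),\dec(k)) \leq -\,\xt(k)^\top Q\, \xt(k)
\end{equation*}
whenever a job is started at $k$. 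Together with the quadratic bounds $c_1\|\xt\|^2\le \VNstar\le c_2\|\xt\|^2$ on the feasible set, summing and taking expectations gives $\sum_k \mathbb{E}\|\xt(k)\|^2<\infty$, and hence mean-square stability.

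\textbf{Step 1 (bounds).} The lower bound follows from $Q\succ0$, since $p_0=1$ at the root. The upper bound is standard: $\VNstar$ is the value function of a convex QP, piecewise quadratic, and bounded on $\Xf$ by the terminal cost through Assumption~\ref{assumption:terminal_set}. Compactness of the feasible region, which holds when $\mathcal{X}$ is bounded, extends this to $c_2\|\xt\|^2$.

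\textbf{Step 2 (descent along a fixed $\dec$).} Take the shifted candidate $\hat{\xt},\hat{u}$ from the proof of Lemma~\ref{lem:recursive_feasible}. Group the cost of the tree at time $k$ by first-level successors: the root stage cost $l(\xt(k),u^\star_{0|k})$ plus $\sum_j \pi_{\theta(k)j}(\dec_k)$ times the cost of the subtree rooted at the successor of mode $j$. Each such subtree, with the $\Gamma$-shifted values and one extra level appended via $K_i$ and \eqref{eq:propagate}, is exactly the candidate cost on $\stree{j,\dec}$. The appended level contributes $l(\xt,K_i\xt)+\sum_j\bar p_{ij}\Vf(\Phi_i\xt,j)-\Vf(\xt,i)\le 0$ by Assumption~\ref{assumption:terminal_set}. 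This yields
\begin{equation*}
\mathbb{E}\big[\VNhat(\hat{\xt},\hat{u},\stree{\theta(k+1),\dec})\big] \le \VNstar(\xt(k),\theta(k),\dec) - l(\xt(k),u^\star_{0|k}),
\end{equation*}
and optimality gives $\VNstar\le\VNhat$ at $k+1$.

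\textbf{Step 3 (completion instants).} When the realised mode is $\textsc{h}$, the next job starts with a new decision, and the relevant candidate cost is on the combined tree $\hat{\mathcal{T}}$. By Lemma~\ref{lem:min_cost_decision} and the DFS of Algorithm~\ref{alg:hybrid_strategy}, $\VNhat(\hat\xt,\hat u,\stree{\textsc{h},\dec^+})\le \VNhat(\hat\xt,\hat u,\stree{\textsc{h},\dec})$ for every $\dec$. In particular this holds for the decision-averaged cost implied by the weights $\bar p$ in \eqref{eq:transition_prob_no_dec}. That averaged cost is what appears inside the subtree after an $\textsc{h}$ node in the tree at time $k$, since after $\textsc{h}$ the tree spans all decisions with weights $a,1-a$. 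Choosing the minimiser therefore does not increase the bound from Step~2.

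\textbf{Step 4 (steps with no new job).} In $\star\textsc{c}$ modes no job is started and $u$ is fixed by the previously computed $\auxvar_2$. Treat these steps as part of the tree started at the last activation and telescope across them using $\Gamma$ (Remark~\ref{rem:g_dependency}). The expected decrease then holds per activation, with accumulated stage costs at least $l(\xt(k),\cdot)$. Because at most $n_\mathrm{M}$ consecutive misses occur, the sampled-instant bound transfers to all $k$, with the intermediate states bounded by finitely many linear maps.

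\textbf{Main obstacle.} I expect the key difficulty to be Step~3: checking that the sub-tree costs after an $\textsc{h}$ node, as weighted by $\bar p$ with factor $a$, form a convex combination of the pruned-tree costs $\VNhat(\cdot,\stree{\textsc{h},\dec})$. Only then does minimisation over $\dec$ dominate them. My first attempt would be to show that $\hat{\mathcal T}$ with weights $\bar p$ decomposes as $\sum_{\dec}\lambda_\dec\,\stree{\textsc{h},\dec}$ with $\lambda_\dec\ge0$ summing to one, by inductive factorisation along overrun levels, $\lambda_\dec=\prod a^{\#\textsc{kll}}(1-a)^{\#\textsc{csn}}$.
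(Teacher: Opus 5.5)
Your proof is correct. It shares the paper's skeleton: the shifted candidate from Lemma~\ref{lem:recursive_feasible}, splitting off the $\textsc{h}$ successor, minimality of $\dec^+$ from Lemma~\ref{lem:min_cost_decision}, the terminal decrease of Assumption~\ref{assumption:terminal_set}, and telescoping. Your Step~3, however, handles the key comparison differently and more soundly.

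\textbf{The comparison at a hit.} The paper uses Lemma~\ref{lem:min_cost_decision} only as $\VNhat(\hat{\xt},\hat{u},\stree{\textsc{h},\dec^+})\le\VNhat(\hat{\xt},\hat{u},\stree{\textsc{h},\dec})$ for the \emph{old} $\dec$. It then identifies $p_{\theta\textsc{h}}\VNhat(\cdot,\stree{\textsc{h},\dec})+\sum_{j\neq\textsc{h}}p_{\theta j}(\dec)\VNhat(\cdot,\stree{j,\dec})$ with the old tree's cost plus one appended level. As written, that identification does not hold: below an $\textsc{h}$ node the old tree is decision-open with weights $\bar p$, not pruned by $\dec$.

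Your mixture repairs exactly this, and it checks out:
\begin{itemize}
\item decided overrun edges carry $1-\pi_{i,\textsc{h}}$, while open ones carry $a(1-\pi_{i,\textsc{h}})$ or $(1-a)(1-\pi_{i,\textsc{h}})$;
\item the weights $\lambda_\dec=a^{\#\textsc{kll}}(1-a)^{\#\textsc{csn}}$ sum to one over every unused tail of $\dec$;
\item nodes with equal mode paths share candidate values in $\hat{\mathcal{T}}$.
\end{itemize}
Hence $\sum_{\dec}\lambda_\dec\VNhat(\hat{\xt},\hat{u},\stree{\textsc{h},\dec})$ equals the $\bar p$-weighted candidate cost and dominates the minimum returned by Algorithm~\ref{alg:hybrid_strategy}.

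One ordering point: apply Assumption~\ref{assumption:terminal_set} on that $\bar p$-weighted tree \emph{before} passing to the minimiser. After several $\star\textsc{c}$ steps, the appended levels of $\stree{\textsc{h},\dec^+}$ can fall in its decided region. There the weights are $\pi_{ij}(\dec^+)$, and the assumption gives nothing.

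\textbf{Other places where you are tighter than the paper.} Your Step~4 and summability conclusion fix two further weak spots. The paper's one-step decrease involves trees rooted at $\star\textsc{c}$ modes, which are never constructed. Its final step argues from a vanishing Ces\`aro mean, which alone would not give $\mathbb{E}\Vert\xt(k)\Vert^2\to0$.

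\textbf{Two simplifications.}
\begin{itemize}
\item Drop the upper bound in Step~1. The telescoped decrease only needs $\VNstar\ge0$ and $\VNstar(\xt(0),\theta(0),\dec(0))<\infty$. Your justification would also require a bounded $\mathcal{X}$, which is not assumed. Moreover, $\VNstar\le\Vf$ on $\Xf$ would need the terminal decrease under the decided weights $\pi_{ij}(\dec)$, which Assumption~\ref{assumption:terminal_set} does not provide.
\item In Step~4, drop the linear-map transfer; the first step after an activation involves $u^\star_{0|k}$, which is not linear in $\xt$. Along the realised mode path the closed-loop states coincide with the predicted nodes. So the telescoped bound already contains $l(\xt(s),u(s))$ for every intermediate $s$, and it gives $\sum_s\mathbb{E}[\Vert\xt(s)\Vert_Q^2]\le\VNstar(\xt(0),\theta(0),\dec(0))$ directly.
\end{itemize}
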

\begin{proof}
    To prove the claim, we first show similar to  \cite{gallant2025softMPC} stochastic contractivity of the optimal value function: 
    \begin{multline}
        \mathbb{E}[ \VNstar(\xt^+,\theta^+,\dec^+)
        - \VNstar(\xt,\theta,d) \mid \theta ] \\
        = {\displaystyle\sum_{j\in\Theta}} p_{\theta j}(d) \VNstar(A_{\theta}x +B_\theta u_0^\star,j,\dec^+) - \VNstar(\xt,\theta,\dec), \label{eq:theorem1:value_decrease_sum}
    \end{multline}
    We consider the suboptimal cost $\VNhat$, i.e., \eqref{eq:cost_function} evaluated with the candidate values $\hat{\xt}$ and $\hat{u}$ from Lemma \ref{lem:recursive_feasible} and account for the updated decision sequence in the event of a deadline hit by factoring out mode $\textsc{h}$ from the summation, resulting in
    \begin{alignat}{3}
        \eqref{eq:theorem1:value_decrease_sum} \leq \; &p_{\theta \textsc{h}} \VNhat(\hat{\xt},\hat{u},\mathcal{T}(\textsc{h},d^+)) \nonumber \\
        & + {\textstyle\sum_{j\in\Theta\backslash \{\textsc{h}\}}} p_{\theta j}(d) \VNhat(\hat{\xt},\hat{u},\mathcal{T}(j,d)) \nonumber\\
        & - \Bigl[ l(\xt,u_0^\star) + 
        \Bigl(
             {\textstyle\sum_{\el \in \mathcal{N}^{\theta,\dec}_{1:N-1}}}  p_\el l({\xt}_\ell^\star,{u}_\ell^\star) \nonumber \\
             &\hspace{2.3cm} + {\textstyle\sum_{\el\in\mathcal{N}^{\theta,\dec}_N}} p_\el \Vf({\xt}_\ell^\star,\mm_\el)
        \Bigr) \Bigr]. \label{eq:theorem1:mss_proof_sep}
    \end{alignat}
    
    With Lemma \ref{lem:min_cost_decision} we have
    \begin{alignat*}{3}
        \VNhat(\hat{\xt},\hat{u},\mathcal{T}(\textsc{h},d^+)) 
        \leq \VNhat(\hat{\xt},\hat{u},\mathcal{T}(\textsc{h},d))
    \end{alignat*}
    and thus bound the cost decrease by
    \begin{subequations}
    \begin{alignat*}{3}
        \eqref{eq:theorem1:mss_proof_sep} \,&\leq -l(\xt,u_0^\star) + {\textstyle\sum_{\ell \in \mathcal{N}^{\theta,\dec}_{N}}} p_\ell \bigl[ -\Vf(\xt_\ell^\star,\theta_\ell)
          \\[-0.1cm]
         & \quad + l(\xt_\ell^\star,K_{\theta_\ell} \xt_\ell^\star) + {\textstyle\sum_{m\in\mathcal{S}_{\theta_\ell}}} \bar{p}_{\theta_\ell \theta_m} \Vf(\Phi_{\theta_\ell} \xt_\ell^\star, \theta_m) \bigr] \\
         &\overset{\text{Ass. } \ref{assumption:terminal_set}}{\leq} - l(\xt,u_0^\star) \leq - \Vert \xt \Vert_Q^2.
    \end{alignat*}
    \end{subequations}
    By Assumption~\ref{assumption:terminal_set} and Lemma~\ref{lem:recursive_feasible} the MPC optimization problem \eqref{eq:ocp} is recursively feasible. Furthermore, we can establish the asymptotic average cost bound
    \begin{align*}
       & \lim_{t\rightarrow \infty}  \nicefrac{1}{t}\, \mathbb{E}[ \VNstar(\xt(t),\theta(t),\dec(t) - \VNstar(\xt(0),\theta(0),d(0))] \\
         \leq&  \lim_{t \rightarrow \infty} \nicefrac{1}{t} {\textstyle \sum_{k = 0}^t} -\mathbb{E}[\Vert \xt(t) \Vert_Q^2],
    \end{align*}
   which, after rearranging the terms shows that
   \begin{align*}
    \lim_{t \rightarrow \infty} \frac{1}{t} \sum_{k = 0}^t \mathbb{E}[\Vert \xt(t) \Vert_Q^2] \leq \lim_{t \rightarrow \infty} \frac{1}{t} \VNstar(\xt(0),\theta(0),d(0)) = 0.   
   \end{align*}
    Since $Q \succ 0$, this implies that $\mathbb{E}(\Vert \xt(t) \Vert^2) \rightarrow 0$ as $t \rightarrow \infty$, which proves mean-square stability of the origin~\cite{patrinos2014mpc} and concludes the proof.
\end{proof}

\section{Evaluation}

We evaluate the proposed approach on two different linear systems, under different configurations.
For both case studies we sample the response times of each job from a lognormal distribution with $\mu=0$ and $\sigma = 0.6$, multiplied by sampling time $t_\mathrm{s}$ and clipped to $(0, (n_\mathrm{M}+1) t_\mathrm{s})$. 
Following our model, the modes of the switched system are determined based on the active overrun strategy.
The maximum value $n_\mathrm{M}$ is enforced in our scheduling simulations, by shortening the response time of the job that would violate it.

\subsection{Double integrator system}
\label{sec:evaluation:integrator}
First we evaluate our approach on a double integrator system with $t_\mathrm{s}=1\,\mathrm{s}$, $n_\mathrm{M}=2$ and actuator strategy \textsc{zero}, with nominal dynamics
$\scriptsize A = \colvec{1 & 1 \\ 0 & 1}$ and $\scriptsize B = \colvec{0.5 \\ 1}.
$
We consider an additive disturbance resulting in the dynamics
\begin{align*}
    \xt(k+1) = A_{\theta(k)} \,\xt(k) + B_{\theta(k)}\, u(k) + W w(k),
\end{align*}
where $W = \colvec{0 & 1 & 0 & 0}^\top$ and $w(k) = 0.5 \cdot (-1)^{k\!\!\pmod{2}}$ for $k\in[3,11]$ and $w(k) = 0$ otherwise.
We use the parameters $N=6$, $Q=10^{-2}\cdot I^{\nstatest \times \nstatest}$ and $R=5\cdot10^{-2}$ and evaluate the control performance using the cost function
    $J(k) = \xt(k)^\top Q \xt(k) + u(k)^\top R u(k)$, summed over a given interval.
We compare our adaptive approach against an MPC design as in \eqref{eq:ocp} but using fixed strategy ($\textsc{csn}$ or $\textsc{kll}$). 
Each design is evaluated across $100$ Monte Carlo simulation runs.

\begin{figure}
    \centering
    \includegraphics{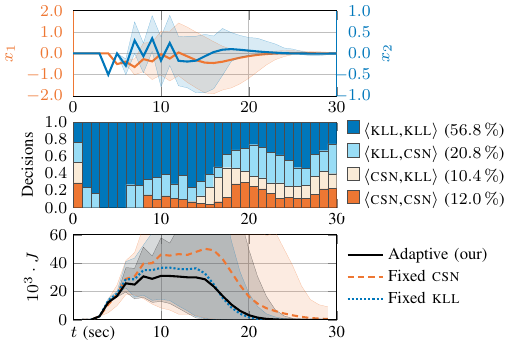}
    \vspace{-0.3cm}
    \caption{Mean and $0.01$ to $0.99$ quantiles of: (top) state trajectories under our hybrid adaptive strategy, (middle) moving window cost (window size $n_\mathrm{M}+1$) for our, fixed $\textsc{csn}$ and $\textsc{kll}$. Percentage of selected decision sequences over time (bottom).}
    \vspace{-3mm}
    \label{fig:integrator}
\end{figure}

Figure \ref{fig:integrator} shows the state trajectories of the double integrator controlled by our hybrid overrun \gls{mpc}; the ratio of selected hybrid decisions at each time step; and the moving window costs obtained with our design, against the fixed ones.
The legend of the mid plot shows the total percentage of each decision across the entire simulation (excluding the initial randomized decision $\dec(0)$).

The results show that strategies are successfully adapted at runtime.
Overall, the mean total cost with fixed strategy $\textsc{csn}$ and $\textsc{kll}$ is $ 0.219$ and $0.144$, respectively, while our hybrid adaptive strategy has total cost of $0.129$, which is a $41.1\,\%$ / $10.4\,\%$ improvement compared to $\textsc{csn}$ / $\textsc{kll}$. 
Similar results (with detailed graphs omitted for space reasons) have been obtained for the same setup with $n_\mathrm{M}=1$ ($44.1\,\%$ / $10.0\,\%$ improvement) and with $n_\mathrm{M}=3$ ($36.3\,\%$ / $9.7\,\%$).

\subsection{Aircraft pitch system}

As a second example, we consider an aircraft pitch control system similar to \cite{hobbs2022safety}, with $t_\mathrm{s}=0.2\,\mathrm{s}$, yielding the nominal dynamics
\begin{align*}\footnotesize
    A = \colvec{\hphantom{-}0.9247 & 10.4772 & 0 \\
                -0.0026 & \hphantom{1}0.9038 & 0 \\
                -0.0150 & 10.8142 & 1}, \quad
    B = \colvec{0.0666 \\ 0.0038 \\ 0.0221}.
\end{align*}
The system states are $x^\top = \colvec{\alpha & q & \phi}$, with $\alpha$ being the angle of attack, $q$ the pitch rate and $\phi$ the pitch angle. The system input is the elevator deflection angle $\delta$. We apply the limit $|\delta|\leq 0.5 \, \mathrm{rad}$ and safety constraint $|\phi| \leq 1\,\mathrm{rad}$, and use
$Q = \mathrm{diag}\,(10^{-2}, 10^{-2}, 0.2, 10^{-2}, 10^{-2})$ and $R=10^{-2}$.
We add an impulse disturbance of $w(k)=\colvec{0 & 0& 0.3 & 0& 0}^\top$ at $k=6$ and $k=20$. 
During $k \in [20,40]$ in addition to the impulse disturbance we simulate a response time disturbance, where the individual control jobs' response times are assigned a constant value randomly sampled from the lognormal distribution, but multiplied by $2$, over all remaining job instances of the simulation.
Again, we compare against the MPC with fixed strategies (\textsc{csn} or \textsc{kll}) and evaluate across $100$ Monte Carlo simulation runs for each design.

\begin{figure}
    \centering
    \includegraphics{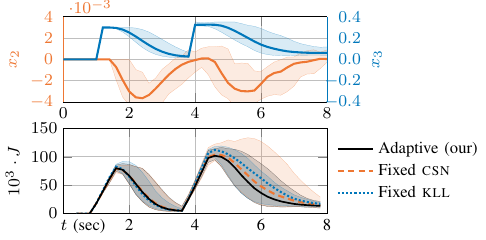}
    \caption{Mean and $0.01$ to $0.99$ quantiles of: (top) state trajectories under our hybrid adaptive strategy; (bottom) moving window cost (window size $n_\mathrm{M}+1$) for our, fixed $\textsc{csn}$ and $\textsc{kll}$.}
    \vspace{-2mm}
    \label{fig:aircraft_trajectories}
\end{figure}

\begin{figure}[t]
    \centering
    \includegraphics{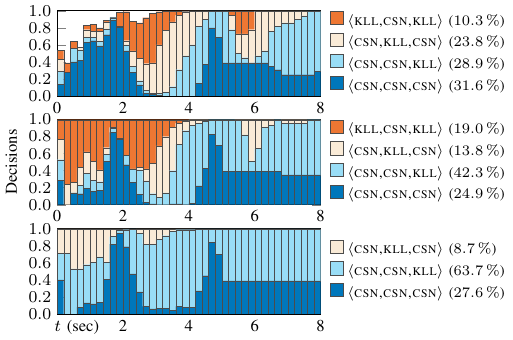}
    \vspace{-0.3cm}
    \caption{Percentage of selected decision sequences for: (top) the full hybrid adaptive approach showing only the most frequent four out of eight alternatives, (middle) pruned approach (a), (bottom) pruned approach (b).}
    \vspace{-4mm}
    \label{fig:aircraft_pruned}
\end{figure}

Figure \ref{fig:aircraft_trajectories} shows $q$ and $\phi$ under our proposed control approach when $n_\mathrm{M}=3$, and the moving window costs evaluated at each time step for our hybrid adaptive and the two fixed designs. 
The total cost's mean $\mu$ and variance $\sigma^2$ are reported in Table \ref{tab:costs_time} for different $n_\mathrm{M}$.
Overall, safety constraints are always guaranteed, and we observe better control performance with our approach for $n_\mathrm{M}>1$, compared to the fixed \textsc{csn}/\textsc{kll} controllers.

\subsection{Pruned hybrid overrun approach}
\label{sec:pruned_hybrid}

Table \ref{tab:costs_time} reports also the average execution times $\Delta t$ (recorded on an i7-1270P processor using MOSEK via YALMIP \cite{mosek,lofberg2004yalmip}) of solving \eqref{eq:ocp} for different $n_\mathrm{M}$, which represents the most computationally intensive component of our design ($>90\,\%$ of total runtime).

The computation times for our hybrid approach are mainly affected by the increasing size of $|\decset|$ with $n_\mathrm{M}$.
To counteract this, we propose a proper \emph{offline pruning} of the possible decision sequences, obtaining $\bar{\decset}\subset\decset$, based on the observation that some sequences are considerably more frequently selected than others, particularly for high $n_\mathrm{M}$. 
An example is shown in the top graph of Figure \ref{fig:aircraft_pruned}, where the four most frequent (out of $8$ total) sequences sum up to $94.6 \,\%$ of total decisions.
Such pruning does not significantly alter the feasibility and mean-square stability arguments, given that a terminal controller exists that satisfies Assumption \ref{assumption:terminal_set} with $\bar{p}_{ij}$ applied for transitions with open decisions, and $p_{ij}$ applied for transitions where the opposing decision has been made impossible by the pruning.

The costs and computation times for different pruned approaches are reported on the right side of Table \ref{tab:costs_time}.
Figure \ref{fig:aircraft_pruned} shows the corresponding frequency of selected hybrid strategies for the full and two pruned approaches (a) and (b) for $n_\mathrm{M}=3$. 
Here, for pruned (a), $\bar{\decset}$ includes only the four most frequent sequences of ${\decset}$, while for pruned (b) only the three most frequent ones ($84.3 \,\%$ of decisions).
For $n_\mathrm{M}=2$ (only shown in the Table) $\bar{\decset}$ includes the two most frequent sequences ($91.9\,\%$).
For $n_\mathrm{M}=4$ we limit $\bar{\decset}$ in (a) to sequences with first decision \textsc{csn} ($85.7\,\%$) and (b) to the two most frequent sequences ($67.9\,\%$).
The results show that careful pruning can significantly decrease the computation time with minimal loss in performance. Also, multiple choices are possible based on the acceptable threshold of such trade-off. 
In particular the pruned (b) case for $n_\mathrm{M}=4$ highlights the potential of such pruning approach, with a $71\,\%$ reduction of computation time with respect to the adaptive full design, and a loss of performance of only $2.7 \,\%$.

\begin{table}[t]
\renewcommand{\arraystretch}{0.95}
    \caption{Total costs $\times 10^{3}$ and OCP solving times in $\mathrm{ms}$ using fixed strategies \textsc{csn} and \textsc{kll}, our adaptive approach (Adapt.) and pruned approaches (Prd. (a), Prd. (b)).}
    \vspace{-2ex}
    \small
    \centering
        \begin{tabular}{p{0.008\linewidth}p{0.02\linewidth}p{0.10\linewidth}p{0.11\linewidth}p{0.14\linewidth}!{\color{black!30}\vrule width 0.2pt}p{0.12\linewidth}p{0.125\linewidth}}
        \toprule
        $\!n_\mathrm{M}$&&{ \textsc{csn}} & {\textsc{kll}} & \textbf{Adapt.} & \textbf{Prd. (a)} & \textbf{Prd. (b)} \\
        \midrule
        $1$ & { $\mu$}  & 335.1 & \textbf{311.7} & 332.7 & & \\
        & { $\sigma^2$}& 0.25 & 0.18 & 0.28 &&  \\
        $2$ & { $\mu$} & 382.9 & 370.5 & \textbf{349.4} & 356.4 &  \\
        & { $\sigma^2$}& 2.73 & 0.71 & 0.83 & 0.91 & \\
        $3$ & { $\mu$} & 413.4 & 448.4 & \textbf{381.1} & 382.9 & 390.5 \\
        & { $\sigma^2$} & 8.3 & 2.6 & 3.3 & 4.0 & 3.8 \\
        $4$ & { $\mu$} & 429.6 & 539.2 & \textbf{398.7} & 402.2 & 409.3 \\ 
        & { $\sigma^2$} & 15.3 & 7.0 & 7.5 & 7.7 & 7.8 \\
        \midrule
        1 &{ $\Delta t$}& 12.6 & 13.2 & 30.3 &  & \\
        2 &{ $\Delta t$}& 19.8 & 22.2 & 78.9 & 25.3 & \\
        3 &{ $\Delta t$}& 22.9 & 23.4 & 109.4 & 86.8 & 45.8 \\        
        4 &{ $\Delta t$}& 26.7 & 33.4 & 142.5 & 105.0 & 41.3 \\
        \bottomrule
        \end{tabular}
    \label{tab:costs_time}\\[2ex]
    \vspace{-5mm}
\end{table}
\section{Conclusion}

This paper introduces the first adaptive hybrid overrun approach for real-time control systems, which iteratively determines at runtime the best predictive combination of overrun strategies, between \emph{kill} and \emph{continue-and-skip-next}, for the next iterations.
We combine a scenario-tree MPC and a DFS-based algorithm to make optimal decisions in terms of both overrun strategy decision sequence and associated control input.
The proposed control policy and adaptive overrun strategy guarantee stability and feasibility for a system subject to state and input constraints.
We show the control performance benefits of our approach on multiple relevant use-cases, up to $44\,\%$ for the double integrator and $26\,\%$ for the aircraft use case. Finally, a pruned design is proposed to reduce the computational runtime (by up to $71\,\%$) with minimal performance loss ($2.7 \,\%$).

\bibliographystyle{IEEEtran}
\bibliography{biblio}

\end{document}